\documentclass[%
  reprint,
  amsmath,amssymb,
  aps,
  superscriptaddress,
]{revtex4-2}

\usepackage{graphicx}
\usepackage{dcolumn}
\usepackage{bm}

\usepackage{amsthm, mathtools, amsfonts}
\usepackage{braket}
\usepackage{newtxtext,newtxmath}
\mathtoolsset{showonlyrefs}

\usepackage[table]{xcolor}
\usepackage{tcolorbox}

\usepackage[colorlinks=true,citecolor=teal,linkcolor=teal,urlcolor=teal]{hyperref}

\usepackage{algorithm}
\usepackage{algpseudocode}

\usepackage{tikz-cd}
\usetikzlibrary{backgrounds}

\usepackage{indentfirst}
\usepackage{enumitem}
\usepackage{ulem}
\usepackage{makecell}
\usepackage{booktabs}
\usepackage{multirow}
\usepackage{subcaption}
\usepackage{ragged2e}
\usepackage{cleveref}
\usepackage{graphicx}
\usepackage{csquotes}
\usepackage{nicefrac}
\usepackage{float}

\makeatletter
\long\def\@makecaption#1#2{%
 \vskip\abovecaptionskip
 {\justifying\noindent #1.\ #2\par}%
 \vskip\belowcaptionskip
}
\makeatother

\newcommand{\eqa}[1]{\begin{equation*}\begin{aligned}#1\end{aligned}\end{equation*}}

\newcommand{\BB}{\mathbb{B}}

\newcommand{\FF}{\mathbb{F}}

\newcommand{\x}{\times}

\newcommand{\tand}{\text{ and }}

\newcommand{\supp}{\mathrm{supp }}

\newcommand{\code}[1]{[\![#1]\!]}

\theoremstyle{definition}
\newtheorem{definition}{Definition}

\newtheorem{lemma}[definition]{Lemma}

\newtheorem{example}[definition]{Example}

\renewcommand{\emph}[1]{\textit{#1}}

\begin{document}

\preprint{APS/123-QED}


\title{Nearest-neighbour gates are all you need:\\High-rate quantum low-density parity-check codes on a planar grid}


\author{Boren Gu}
\email{b.gu@fu-berlin.de}
\affiliation{Dahlem Center for Complex Quantum Systems,
Freie Universität Berlin, Berlin, 14195, Germany}

\author{Tamas Noszko}
\email{tamas.noszko@ed.ac.uk}
\affiliation{Quantum Software Lab, The University of Edinburgh, Edinburgh, EH8 9AB, United Kingdom}

\author{Vincent Steffan}
\affiliation{IQM Quantum Computers, Munich, 80992, Germany}

\author{Jens Niklas Eberhardt}
\affiliation{Institute of Mathematics, Johannes Gutenberg-Universität Mainz, Mainz, 55128, Germany}

\author{Joschka Roffe}
\affiliation{Quantum Software Lab, The University of Edinburgh, Edinburgh, EH8 9AB, United Kingdom}

\author{Jens Eisert}
\affiliation{Dahlem Center for Complex Quantum Systems,
Freie Universität Berlin, Berlin, 14195, Germany}

\affiliation{Helmholtz-Zentrum Berlin für Materialien und Energie, Berlin, 14109, Germany}

\author{Stergios Koutsioumpas}
\email{skoutsio@ed.ac.uk}
\affiliation{Quantum Software Lab, The University of 
Edinburgh, Edinburgh, EH8 9AB, United Kingdom}

\date{\today}

\begin{abstract}
    High-performance quantum low-density parity-check codes promise substantial reductions in the overhead of fault-tolerant quantum computation, but most constructions require long-range connectivity or qubit shuttling, both of which are difficult to realise in superconducting architectures. Here we introduce a family of quantum low-density parity-check codes that, for the first time, combines planar open-boundary layouts, finite-size advantages over surface codes, and syndrome extraction using only nearest-neighbour gates on a square grid of qubits. The key idea is to generate check-data connectivity dynamically: nearest-neighbour iSWAP walks both define the stabiliser supports and implement their measurement, avoiding the need for a long-range hardware graph. The resulting circuits achieve optimal constant-depth stabiliser measurement, independent of code size, and naturally remove leakage from the system by exchanging the role of check and data qubits at each syndrome extraction round. We find finite-size instances such as a \(\code{323,14,15}\) code, whose code-efficiency ratio is nearly an order of magnitude larger than that of rotated surface-code patches. At around \(30\) circuit qubits per logical qubit, the best directional tile-code layouts reduce the per-logical per-round logical error rate by up to a factor of \(1000\) relative to rotated surface-code memories. These results show that the advantages of quantum low-density parity-check codes can survive compilation into strictly planar nearest-neighbour circuits, bringing low-overhead fault-tolerant memories closer to near-term hardware.
\end{abstract}
\maketitle

\section{Introduction}

Quantum error correction is a central requirement for turning noisy quantum devices into reliable machines that can perform useful computations beyond what is feasible classically.
Indeed, fault-tolerant operations are expected to be essential for implementing algorithms beyond logarithmic depth~\cite{Roads,QECBasic,MindTheGaps,Nonunital,FG20}. The core question is therefore not whether error correction is needed, but which codes can deliver the required logical protection within the geometric and connectivity constraints of real hardware.

The surface code has long been the leading answer, especially for superconducting processors where two-qubit gates are typically restricted to nearest-neighbour interactions on a two-dimensional grid~\cite{TopologicalQuantumMemory,Kitaev-AnnPhys-2003,GoogleThreshold,Krinner_2022,Zhao_2022}. Its appeal lies in its locality: stabilisers are geometrically local and can be measured by short-depth nearest-neighbour circuits. This hardware compatibility, however, comes at the cost of a low encoding rate. A rotated surface-code patch with distance \(d\) encodes only one logical qubit into \(d^2\) data qubits, leading to substantial physical-qubit overhead for large fault-tolerant computations.

\emph{Quantum low-density parity-check}  (qLDPC) codes offer a route beyond this overhead~\cite{PRXQuantum.2.040101,Panteleev2021degeneratequantum,9996782,9567703,10.1109/TIT.2021.3097347,bravyi2024high}. By encoding many logical qubits in a single block while retaining bounded-weight stabilisers, qLDPC codes can in principle reduce the footprint of fault-tolerant memories far below that of separate surface-code patches. Recent constructions have made this promise increasingly concrete, with improved asymptotic resource scaling and striking finite-size reductions in physical-qubit overhead~\cite{Pinnacle,Preskill10000,zhao2026ultrahighratequantumerrorcorrection,khan2026architectingearlyfaulttolerant,wills2026concatenatingalgebraiccodeshighrate}. The central obstacle is implementation: high-performance qLDPC codes typically require check-data interactions that are non-local in any planar embedding, and hence rely on qubit shuttling or reconfigurable connectivity~\cite{dasu2026computingencodedlogicalqubits,Xu2024ConstantOverhead,LukinQEC}. In superconducting processors, one route to such connectivity is provided by flip-chip architectures, which can route signals vertically through the processor stack to couple qubits that are not nearest neighbours in the plane~\cite{bravyi2024high,Mathews_2026,Wang_2026}. This approach is in principle powerful, but it introduces additional three-dimensional integration and associated substantial fabrication complexity.

This tension has a rigorous asymptotic origin. Stabiliser codes with geometrically local checks on a two-dimensional lattice obey strong restrictions on their parameters, including bounds of the form \(kd^2=O(n)\)~\cite{BravyiPoulinTerhal,ConnectivityKrishna}. Such results rule out simultaneously high rate and large distance in the large-block-length limit under static two-dimensional locality. They do not, however, settle the finite-size question most relevant to early fault-tolerant devices~\cite{low2026denserplanarsurfacecode}: can one obtain a substantial overhead advantage over the surface code while retaining a strictly planar nearest-neighbour implementation? 

\begin{figure*}
    \centering
    \includegraphics[width=\linewidth]{figures/fig_1_grid.pdf}
    \caption{\textbf{Planar directional tile codes.}
    A directional tile code can be viewed either as an open-boundary patch cut from a toric directional code~\cite{GeherByfieldRuban2025}, or as a planar tessellation by connected-string tiles in the spirit of tile codes~\cite{steffan2025high}.
    \textbf{(a) Toric directional code.}
    A toric directional code on a twisted torus, with data qubits on edges and \(X\)- and \(Z\)-type checks on vertices and plaquettes. Each stabiliser follows the directional word \(\mathfrak D=\texttt{N}^2\texttt{ESE}\texttt{N}^2\).
    \textbf{(b) Planar directional tile 
    code (this work).}
    The corresponding open-boundary \(\texttt{N}^2\texttt{ESE}\texttt{N}^2\) directional tile code with parameters \(\code{60,4,5}\), embedded in a hardware-like square-grid layout. Dark gray dots are data qubits, coloured squares are check qubits, and unfilled circles are routing sites for nearest-neighbour measurement walks. Highlighted \(X\)- and \(Z\)-type stabilisers have the same connected-string shape and are measured by simultaneous CXSWAP walks. The displayed layout is routing-optimised, with shifted and pruned boundary qubits; see~\Cref{fig:routing_optimisation}. Background colours mark tile anchors.
    \textbf{(c) Tile-code construction.}
    Equivalent planar tessellation by directional \(X\)- and \(Z\)-tiles whose supports form the same connected string on primal and dual lattices.}
    \label{fig:1}
\end{figure*}

We answer this question affirmatively by introducing \emph{directional tile codes}: planar qLDPC memories whose syndrome-extraction circuits are implemented on a square-grid layout using only nearest-neighbour \(\mathrm{iSWAP}\) gates~\cite{eickbusch_demonstration_2025,Sung_2021,Picard_2024,chen2025efficient,inoue2026parametricallydriveniswapgate,tiwari2026highfidelityiswapgatedouble}, up to single-qubit Clifford rotations. The key idea is to use the exchange character of the \(\mathrm{iSWAP}\) interaction not merely as a two-qubit entangling gate, but as a spacetime routing primitive. During syndrome extraction, the \(\mathrm{iSWAP}\)-induced SWAP dynamics moves quantum information through the plane while simultaneously generating the required check-data interactions. Each stabiliser is therefore measured as a nearest-neighbour walk on the hardware graph, rather than through a static long-range coupling graph. 

Due to this mechanism, directional tile codes are especially promising with regard to leakage noise~\cite{Fowler_2013,Miao_2023}, a critical bottleneck for running quantum error correction protocols on superconducting hardware. On the one hand, \(\mathrm{iSWAP}\) gates can naturally minimize leakage generation and transport~\cite{eickbusch_demonstration_2025}. On the other hand, our dynamic syndrome extraction circuits for directional tile codes exchange data and check qubits after each round of syndrome extraction, similarly to walking circuits for surface codes~\cite{McEwen2023relaxinghardware}. In this way, both data and check qubits are reset every other round naturally, which has been demonstrated to effectively remove the population from higher excited states~\cite{eickbusch_demonstration_2025}.

The relevant benchmark is twofold. At the code level, one would like finite-size parameters that improve substantially over rotated surface-code patches~\cite{Bombin_2007}; a convenient measure is the \textit{code-efficiency ratio}, \(kd^2/n\), which equals \(1\) for a rotated surface-code patch with parameters \(\code{d^2,1,d}\). At the implementation level, this advantage must survive after compiling syndrome extraction into a local circuit, including all data, check, and potential routing qubits required by the hardware layout.

Directional tile codes perform strongly under both benchmarks. At the code level, our search identifies compact instances such as the \(\code{323,14,15}\) code, whose code-efficiency ratio is nearly an order of magnitude larger than that of a rotated surface-code patch at the same distance; see~\Cref{tab:code_para}. At the implemented-circuit level, after accounting for data, check, and routing qubits, circuit-level simulations show that the resulting nearest-neighbour layouts can fall below rotated surface-code footprint curves at a realistic physical error rate \(p=0.001\). In particular, at a footprint of around \(30\) circuit qubits per logical qubit, the best directional tile-code layouts reduce the per-logical per-round logical error rate by up to three orders of magnitude relative to rotated surface-code memories encoding the same number of logical qubits; see~\Cref{fig:footprint}.

\section{Planar qLDPC memories from\\
directional words}

\begin{figure}[]
    \centering
    \includegraphics[width=\linewidth]{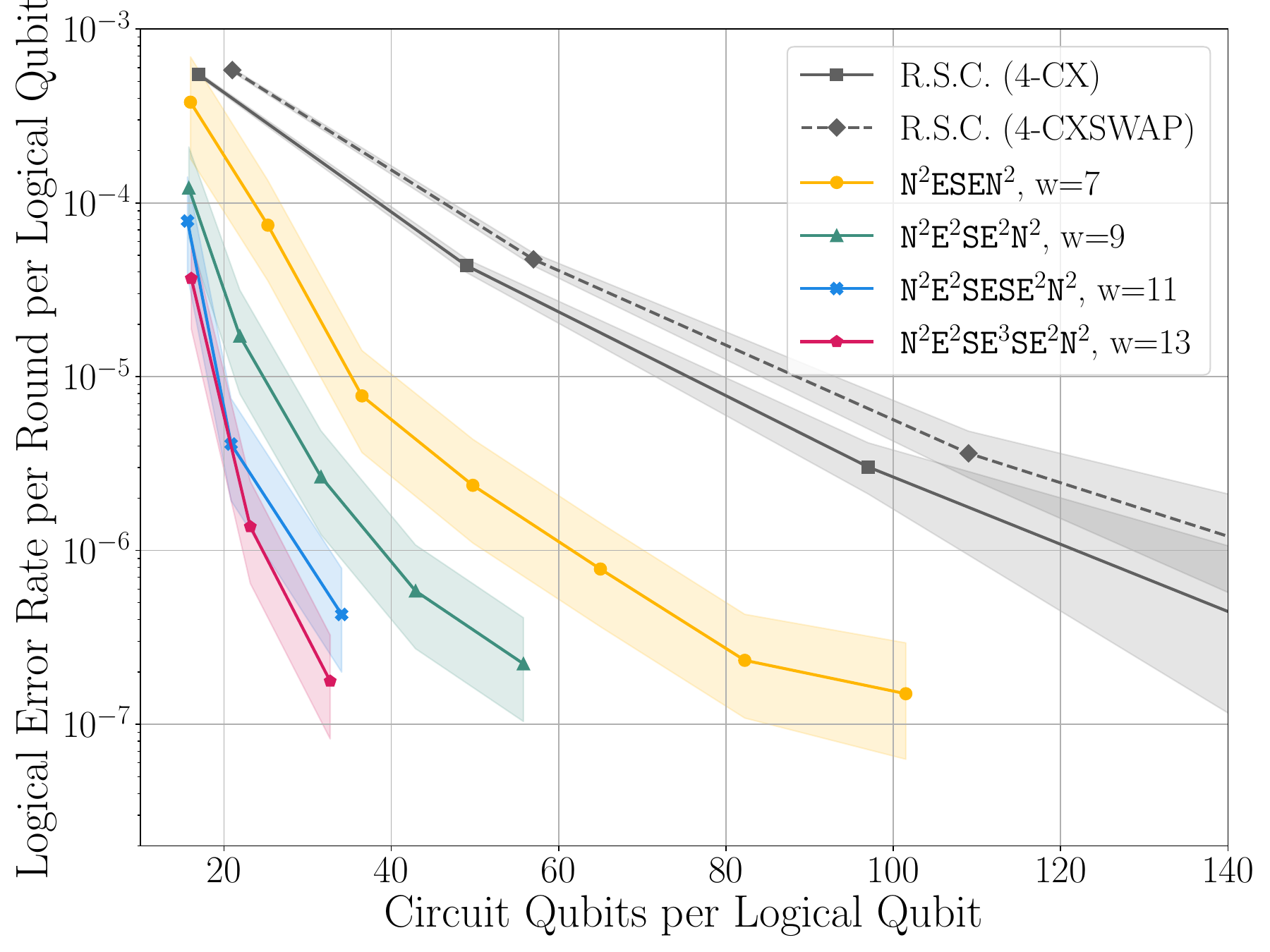}
    \caption{\textbf{Implemented footprint advantage over rotated surface-code memories.}
    Circuit-level logical error rate per syndrome-extraction round per logical qubit at physical error rate \(p=0.001\), obtained from simulations of $4$ consecutive syndrome-extraction rounds for memories storing \(140\) logical qubits in total. The choice \(140\) is the least common multiple of the logical dimensions of the codes considered, enabling a direct comparison at fixed encoded memory size. This corresponds to \(140\) rotated surface-code patches, or to \(35\), \(14\), \(10\), and \(7\) directional tile-code patches for the \(w=7\), \(9\), \(11\), and \(13\) families, respectively. For each of the \(d=3,5,7\) rotated surface codes, we show two circuit implementations: one based on \(\mathrm{CX}\) gates and one based on \(\mathrm{CXSWAP}\) gates. The horizontal axis gives the total number of circuit qubits per logical qubit, including data, check, and routing qubits in the implemented nearest-neighbour syndrome-extraction circuits. Representative directional tile-code instances are shown in~\Cref{tab:code_para}, and the full set of simulated codes is listed in~\Cref{tab:full_sims_codes}.}
\label{fig:footprint}
\end{figure}

The construction is organised around \emph{directional words}. A directional word is an ordered string of lattice steps. Geometrically, it traces a bounded-size connected string on the square lattice; algebraically, this string defines the support of a stabiliser; operationally, the same ordering specifies the local measurement schedule. The paired \(X\)- and \(Z\)-type checks are obtained by placing the same string on primal and dual lattices, respectively, so that one directional word defines both the \textit{Calderbank–Shor–Steane} (CSS) stabiliser structure and the spacetime walk used to measure it. Figure~\ref{fig:1} illustrates this principle for the word \(\mathfrak D=\texttt{N}^2\texttt{ESE}\texttt{N}^2\), while \Cref{fig:directions} shows the nearest-neighbour physical-action layers associated with the four lattice directions.

\begin{table}[]
\vspace{0.6em}
{
\begin{tabular}{l|c|c|c}
\toprule
Code family& Nearest-neighbour & 2D planar & $\eta_{\mathrm{circ}} >5$\\
\midrule
\midrule
Barbell code~\cite{choe2026barbellcodesqldpccodes} & $\times$ & $\checkmark$ & $\checkmark$\\
Denser surface code~\cite{low2026denserplanarsurfacecode} & $\checkmark$ & $\checkmark$ &$\times^\dagger$\\
Tile code~\cite{steffan2025high,qv65-vmzr} & $\times$ & $\checkmark$ & $\checkmark$\\
Toric directional code~\cite{GeherByfieldRuban2025} & $\checkmark$ & $\times$ & $?^\ddagger$\\
\textbf{This work} & $\checkmark$ & $\checkmark$ & $\checkmark$\\

\bottomrule
\end{tabular}
\caption{\textbf{Comparison of code families by nearest-neighbour implementability, planar open-boundary geometry, and circuit-efficiency ratio.}
The circuit-efficiency ratio
\(\eta_{\mathrm{circ}}=k\,n_{\mathrm{RSC}(d)}/n_{\mathrm{circ}}\)
(see Tab.~\ref{tab:code_para}), benchmarks the implemented layout against \(k\) rotated surface-code patches. Here \(n_{\mathrm{circ}}\) includes all physical qubits used by the syndrome-extraction layout.
\(^{\dagger}\)For the denser surface-code construction, the non-yoked dense-packing architecture has \(\eta_{\mathrm{circ}}\lesssim 2\). The larger quoted densities, up to \(\eta_{\mathrm{circ}}\lesssim 4.5\), rely on yoking, which introduces additional parity-check measurements, workspace, latency, and idealised outer-decoder assumptions~\cite{low2026denserplanarsurfacecode}.
\(^{\ddagger}\)For toric directional codes, Ref.~\cite{rowshan2026structuralanalysisdirectionalqldpc} reported examples only up to stabiliser weight \(7\). Here we show that allowing for slightly larger stabiliser weights substantially improves this ratio.
}
}
\label{tab:code_family_comparison}
\end{table}

\begin{table*}[]
\centering
\begin{tabular}{c|c|c|c|c}
\toprule
Directional word, weight&Code parameter&Code-efficiency ratio&Circuit-efficiency ratio&Per-logical per-round logical error rate\\
$\mathfrak D,\ w$  & $\code{n,k,d}$ &$kd^2/n$ & $\eta_{\mathrm{circ}} = k\ n_{\mathrm{RSC}(d)}/n_{\mathrm{circ}}$&  $p_L(p=0.001)$\\
\midrule
\midrule



\multirow{2}{*}[-0.6ex]
{$\texttt{N}^2\texttt{ESE}\texttt{N}^2,\ 7$} & $\code{60,4,5}$ & 1.67 & $4\cdot 49/146 \simeq1.34$ & $7.76\x 10^{-6}$\\
\cmidrule{2-5}
 & $\code{180,4,9}$ & 1.80 & $4\cdot 161/406 \simeq1.59$ & $1.50\x 10^{-7}$\\

\cmidrule{1-5}
\multirow{2}{*}[-0.6ex]{$\texttt{N}^2\texttt{E}^2\texttt{S}\texttt{E}^2\texttt{N}^2,\ 9$} 
 &$\code{217,10,7}$& 2.26 & $10\cdot 97/558\simeq 1.74$& $2.23\x10^{-7}$\\
\cmidrule{2-5}
 &$\code{351,10,9}$& 2.31 & $10\cdot 161/864\simeq 1.86$& $-$\\

\cmidrule{1-5}
\multirow{2}{*}[-0.6ex]{$\texttt{N}^2\texttt{E}^2\texttt{SES}\texttt{E}^2\texttt{N}^2,\ 11$}&
$\code{182,14,10}$ & 7.69 & $14\cdot 199/477\simeq 5.84$ & $4.29\x10^{-7}$\\
\cmidrule{2-5}
&$\code{323,14,15}$ & 9.75 & $14\cdot449/ 797\simeq 7.89$&$-$\\

\cmidrule{1-5}
{$\texttt{N}^2\texttt{E}^2\texttt{S}\texttt{E}^3\texttt{S}\texttt{E}^2\texttt{N}^2,\ 13$} & $\code{248,20,11}$& 9.76 & $20\cdot 241/654\simeq 7.37$& $1.77\x 10^{-7}$\\

\bottomrule
\end{tabular}

\caption{\textbf{Small examples of directional tile codes and their circuit-level performance.}
For each directional word \(\mathfrak D\), \(w\) denotes the stabiliser/check weight. All listed codes admit a nearest-neighbour syndrome-extraction circuit of depth \(w\), or \(w+2\) including check-qubit preparation and measurement. Here \(n_{\mathrm{circ}}\) is the total number of data, check and routing qubits used in the implemented nearest-neighbour circuit. The \textit{code-efficiency ratio} \(k d^2/n\) measures the underlying CSS-code parameters, while the \textit{circuit-efficiency ratio} \(k\,n_{\mathrm{RSC}(d)}/n_{\mathrm{circ}}\) compares the implemented layout with \(k\) independent rotated-surface-code memories of the same distance, using \(n_{\mathrm{RSC}(d)}=2d^2-1\).
The final column gives the circuit-level logical error rate per logical qubit per syndrome-extraction round at \(p=0.001\). These Monte Carlo simulations were performed only for codes with \(n\leq 300\); entries marked by ``$-$'' were not simulated.}
\label{tab:code_para}
\end{table*}

We obtain open-boundary planar memories by using directional words as local tessellation rules. Earlier directional-code constructions realised these rules as translationally invariant stabiliser patterns, with finite codes obtained by imposing periodic boundary conditions~\cite{GeherByfieldRuban2025,rowshan2026structuralanalysisdirectionalqldpc}. Our approach is different: we apply the same local directional prescription directly to a finite planar patch, yielding hardware layouts with open boundaries rather than periodic ones. The key step is not merely to place paired connected-string \(X\)- and \(Z\)-tiles on an open patch, in the spirit of \textit{tile codes}~\cite{steffan2025high,qv65-vmzr}, but to do so while preserving the ordered walk specified by the directional word. This ensures that every bulk and boundary stabiliser remains measurable by the same nearest-neighbour \(\mathrm{iSWAP}\)-based dynamics. In the bulk, data qubits occupy lattice edges and \(X\)- and \(Z\)-type check qubits occupy vertices and plaquettes, respectively. Near the boundary, check and data positions may be shifted along the directional words from the standard locations as part of the routing optimisation. Routing qubits are inserted only where required to keep the ordered walk nearest-neighbour on the physical square grid. For a directional word of weight \(w\), one syndrome-extraction round has depth \(w+2\), including check-qubit initialisation and measurement, independent of the overall code size. Thus the same bounded object gives both bounded-weight stabilisers and optimal-depth nearest-neighbour extraction circuits.

Our construction shows that the finite-size advantages of qLDPC codes can be retained under the same square-grid nearest-neighbour constraint that makes the surface code hardware-compatible. After including the routing overhead required for nearest-neighbour syndrome extraction, the implemented memories can still outperform rotated surface-code footprint curves under a uniform circuit-level noise model. This demonstrates that qLDPC overhead reductions need not rely on long-range couplers, qubit shuttling, or non-planar connectivity: the required check-data interactions can instead be generated dynamically in spacetime by native \(\mathrm{iSWAP}\)-based walks. Directional tile codes therefore provide a route toward low-overhead fault-tolerant memories that preserve strict two-dimensional nearest-neighbour locality while moving beyond the low encoding rate of the surface code.
\section{Nearest-neighbour syndrome extraction}
\label{sec:syndrome_extraction}

\begin{figure}
    \centering
    \includegraphics[width=\linewidth]{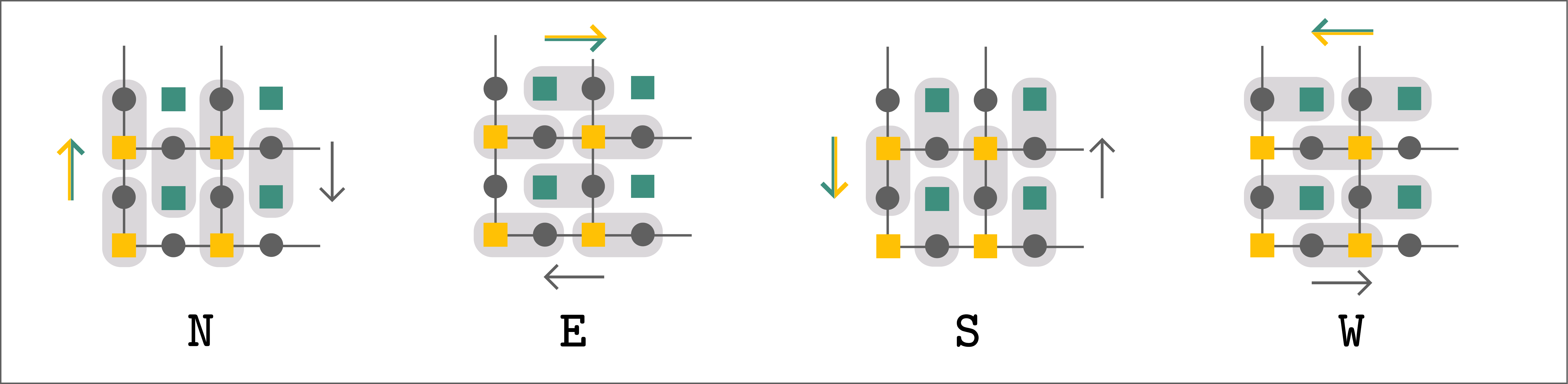}
    \caption{\textbf{Nearest-neighbour directional layers~\cite{GeherByfieldRuban2025}.} Local physical action associated with the four elementary directions \(\{\texttt{N},\texttt{E},\texttt{S},\texttt{W}\}\) in a directional word. Coloured check qubits move along the indicated direction by nearest-neighbour exchange operations, while interacting with data qubits via $\mathrm{CX}$ gates whenever the walk encounters a data qubit.
    Specifically, we apply $\mathrm{CXSWAP}(X\text{-check},\text{data})$ and $\mathrm{CXSWAP}(\text{data},Z\text{-check})$ respectively for different type of check-data pairs.
    } \label{fig:directions}
\end{figure}

We now make the \(\mathrm{iSWAP}\)-native syndrome-extraction circuit explicit. The input is a \textit{directional word}
\[
    \mathfrak D=\vec d_1\vec d_2\cdots \vec d_w,\qquad
    \vec d_i\in\{\texttt{N},\texttt{E},\texttt{S},\texttt{W}\},
\]
together with a square-grid hardware layout consisting of data qubits \(D\), \(X\)- and \(Z\)-type check qubits \(C_X,C_Z\), and routing qubits \(R\). The word \(\mathfrak D\) specifies a sequence of nearest-neighbour directions. During syndrome extraction, each check qubit follows this sequence as a spacetime walk: whenever the walk encounters a data qubit, the circuit accumulates the corresponding stabiliser parity, and whenever it encounters a routing site, the quantum state is moved to keep the walk nearest-neighbour. Thus the ordered support of the stabiliser also guides the local schedule used to measure it.

The circuit uses only nearest-neighbour \(\mathrm{iSWAP}\) gates and single-qubit Clifford rotations. We express each data-check interaction in terms of a \(\mathrm{CXSWAP}\) gate, locally equivalent to an \(\mathrm{iSWAP}\) gate~\cite{yoshida2025lowdepthcolorcode,McEwen2023relaxinghardware}
\eqa{
\mathrm{CXSWAP}(q_0,q_1)
&=(S^\dagger\otimes HS^\dagger)\,
  \mathrm{iSWAP}(q_0,q_1)\,
  (H\otimes I),\\
&=\mathrm{SWAP}(q_0,q_1)\,\mathrm{CX}(q_0,q_1).
}
The $\mathrm{CX}$ part accumulates the stabiliser eigenvalue on the check qubit, while the $\mathrm{SWAP}$ part advances the relative positions of the check, data and routing qubits. For \(X\)-type checks the check qubit acts as the control, whereas for \(Z\)-type checks the orientation is reversed, after the usual basis change; see \Cref{fig:directions}.

One round of syndrome extraction is given in~\Cref{alg:syndrome_extraction}. The algorithm iterates through the directional word \(\mathfrak D=\vec d_1\cdots \vec d_w\), applying one nearest-neighbour directional layer for each step \(\vec d_i\). Consecutive rounds are alternated between the word \(\mathfrak D\) and the inverse word, so that the physical layout is restored without introducing long-range operations.

\begin{algorithm}[H]
\caption{\textbf{Nearest-neighbour syndrome extraction}}
\label{alg:syndrome_extraction}
\begin{algorithmic}[1]
\Require Directional word \(\mathfrak D=\vec d_1\cdots \vec d_w\) and a square-grid layout with data qubits \(D\), check qubits \(C_X,C_Z\), and routing qubits \(R\)
\Ensure One syndrome-extraction round using only nearest-neighbour \(\mathrm{iSWAP}\)-based operations

\State Initialise \(C_X\) in \(\ket{+}\) and \(C_Z\) in \(\ket{0}\)

\For{$i=1,\dots,w$}
    \ForAll{$q\in C_X\cup C_Z$}
        \If{$q^{\vec d_i}\in D$} \Comment{neighbour of \(q\) in direction \(\vec d_i\)}
            \If{$q\in C_X$}
                \State apply \(\mathrm{CXSWAP}(q,q^{\vec d_i})\)
            \Else
                \State apply \(\mathrm{CXSWAP}(q^{\vec d_i},q)\)
            \EndIf
        \ElsIf{$q^{\vec d_i}\in R$}
            \State apply \(\mathrm{SWAP}(q,q^{\vec d_i})\)
        \EndIf
    \EndFor

    \ForAll{$q\in R$}
        \If{$q^{\vec d_i}\in D$}
            \State apply \(\mathrm{SWAP}(q,q^{\vec d_i})\)
        \EndIf
    \EndFor
\EndFor

\State Measure \(C_X\) in the \(X\) basis and \(C_Z\) in the \(Z\) basis
\end{algorithmic}
\end{algorithm}

\begin{figure*}
    \centering
    \includegraphics[width=\linewidth]{figures/circuit_NESEN.pdf}
    \caption{\textbf{Example of one round of nearest-neighbour syndrome extraction.}
    Spacetime trajectory of the nearest-neighbour CXSWAP syndrome-extraction circuit for a $\code{18,2,dx =3, dz =2}$ directional tile code with directional word
    \(\mathfrak D=\texttt{NESEN}\). The top-left panel shows the initial layout, while the following panels show the five directional layers \(\texttt{N}\), \(\texttt{E}\), \(\texttt{S}\), \(\texttt{E}\), and \(\texttt{N}\). In each layer, check qubits are moved along the prescribed direction by parallel nearest-neighbour CXSWAP operations, as described in Algorithm~\ref{alg:syndrome_extraction}. The red and blue paths highlight representative \(X\)- and \(Z\)-type check trajectories during the same extraction round. The circuit shown includes boundary scheduling optimisations: qubits whose first interaction occurs only in a later layer are reset only immediately before use, and check qubits whose final interaction has already occurred are measured as soon as possible. Consequently, some reset and measurement operations appear inside the directional sequence rather than only at the beginning or end of the round; these scheduling choices reduce idle time without changing the stabilisers being measured.}\label{fig:nesen_syndrome_extraction}
\end{figure*}

The $\mathrm{SWAP}$ gates in~\Cref{alg:syndrome_extraction} are also implemented using the same \(\mathrm{iSWAP}\) primitive. A bare \(\mathrm{SWAP}\) is less natural on many superconducting platforms~\cite{chen2025efficient,PRXQuantum.5.020338,Kri_an_2025}. Instead, when a routing qubit is initialised in \(\ket{0}\), the action of \(\mathrm{CXSWAP}\) agrees with a SWAP on the relevant input state,
\[
    \mathrm{SWAP}(\ket{0}_r,q)=\mathrm{CXSWAP}(\ket{0}_r,q),
\]
where \(\ket{0}_r\) denotes the routing qubit and \(q\) is a data or check qubit. As such, both parity-accumulating interactions and routing moves are compiled into the same nearest-neighbour \(\mathrm{iSWAP}\)-based gate family.

Faults on routing qubits, or faults occurring during routing steps, can propagate to data or check qubits. This propagation remains bounded: in one syndrome-extraction round, each data or check qubit participates in at most \(w\) two-qubit interactions, including those involving routing qubits; see Lemma~\ref{lem:check_weight} in Appendix~\ref{app:algebraic_rep}. Routing qubits also provide additional control points in the circuit. They may be reset after each routing step, or measured so that non-trivial outcomes supply flag information for faults occurring along the routing path; see Appendix~\ref{app:simulation}. For a directional word of weight \(w\), one syndrome-extraction round has optimal depth \(w\), or depth \(w+2\) including check-qubit initialisation and measurement.

This exchange-based formulation is also well aligned with recent leakage-aware approaches to superconducting quantum error correction. Time-dynamic surface-code circuits have shown that swapping data and measurement roles can help remove accumulated non-computational population, and related experiments have demonstrated surface-code cycles using \(\mathrm{iSWAP}\)-family gates~\cite{McEwen2023relaxinghardware,eickbusch_demonstration_2025}. Our circuits naturally support the same data-check exchange mechanism and, moreover, allow routing qubits to be reset or measured during the walk, providing built-in locations where leakage-mitigation or flagging strategies could be incorporated. We do not model leakage below, so we use this point only to highlight compatibility with existing leakage-mitigation strategies.


\section{Directional tile codes}

A \textit{tile} is a subset of edges of a \(B\times B\) square grid, excluding the edges on the top row and the rightmost column. We identify edges with physical data qubits, so that a tile specifies the support of an \(X\)-type or \(Z\)-type stabiliser check. To ensure that the corresponding \(X\)- and \(Z\)-type stabilisers commute, the two tiles are required to satisfy the \textit{mutual condition}: for each horizontal, respectively vertical, edge of the \(X\)-tile with coordinate \((a,b)\), the \(Z\)-tile contains the vertical, respectively horizontal, edge with coordinate \((B-1-a,B-1-b)\)~\cite{steffan2025high}.

A pair of \(X\)- and \(Z\)-tiles is called \textit{directional} if the support of one tile forms an ordered connected string, and the other tile contains the same string on the dual lattice. Such a pair is labelled by the corresponding sequence of lattice directions, the \textit{directional word}, \(\mathfrak D=\vec d_1\vec d_2\cdots \vec d_w\).
For example, the \(\texttt{N}^2\texttt{ESE}\texttt{N}^2\)-tiles are shown in \Cref{fig:1}~(c).

\begin{figure}[]
  \centering
  \includegraphics[width=\linewidth]{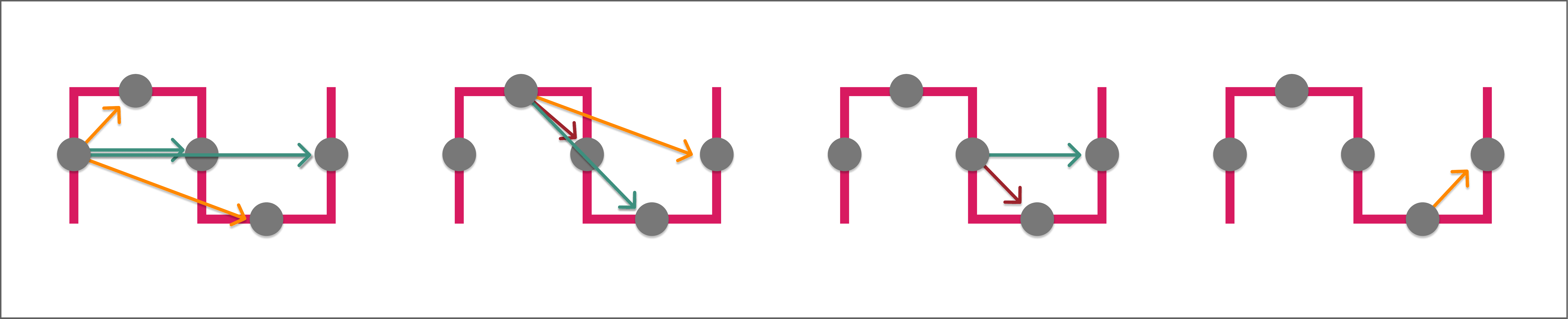}
  \caption{\textbf{Examples of displacement vectors}. The connected string of the \(X\)-tile of the directional word $\mathfrak D=\texttt{NESEN}$, with data qubits highlighted by grey dots. Displacement vectors are illustrated for each starting point. Green vectors have even vertical displacement, while the different shades of yellow indicate pairs of vectors with odd vertical displacement.}
  \label{fig:displacement_vec}
\end{figure}

The mutual condition guarantees commutation of the static stabiliser generators, but by itself it does not guarantee that the syndrome-extraction circuit in \Cref{alg:syndrome_extraction} is deterministic. We therefore impose an additional parity condition, obtained as a simplified form of Theorem~1 with Layout~1 of Ref.~\cite{GeherByfieldRuban2025}: every displacement vector of the ordered connected string with odd vertical displacement must occur an even number of times. Here, a displacement vector is the vector from one edge in the ordered support of the connected string to a later edge, with the ordering fixed by the directional word; see \Cref{fig:displacement_vec}.

\begin{definition}[Directional tiles]
\label{def:directional_tiles}
A pair of \(X\)- and \(Z\)-tiles is called \emph{directional} if the tiles satisfy the mutual condition, one tile forms an ordered connected string labelled by a directional word \(\mathfrak D=\vec d_1\vec d_2\cdots \vec d_w\), and the other tile contains the same string on the dual lattice. In addition, every displacement vector of the ordered connected string with odd vertical displacement must occur with even multiplicity.
\end{definition}

The \textit{directional tile codes} are obtained by tessellating a pair of directional tiles on an \((M+B-1)\times (N+B-1)\) rectangular grid, again with the edges on the top row and the rightmost column removed. Starting from the bottom-left vertex, we choose the vertices of an \(M\times N\) subgrid as \textit{anchors}. In the bulk, both the \(X\)- and \(Z\)-tiles are supported at each anchor. Along the boundary, \(X\)- and \(Z\)-type tiles are placed separately, as shown in \Cref{fig:1} (b) and (c), so that any \(X\)-boundary tile and \(Z\)-boundary tile either have disjoint support or intersect only on edges contained in the prescribed layout.

After tessellation, we prune unnecessary boundaries. Edges that are contained in tiles of at most one type are removed, together with any stabiliser tiles whose support becomes empty after this pruning step. The remaining edges define the data qubits, while the remaining tile anchors define the positions of the \(X\)- and \(Z\)-type check qubits. This gives a complete CSS code.

To realise the resulting directional tile code on a planar square-grid device using only nearest-neighbour iSWAP gates, we embed the data and check qubits into the square-grid hardware layout. Near the boundary, the tessellation can leave gaps between check qubits and the data qubits with which they must interact during the ordered walk. These gaps are filled with \textit{routing qubits}, shown as uncoloured circles in \Cref{fig:1}~(b) and \Cref{fig:routing_optimisation}, so that every step of the syndrome-extraction circuit remains nearest-neighbour.

\begin{figure}
    \centering
    \includegraphics[width=\linewidth]{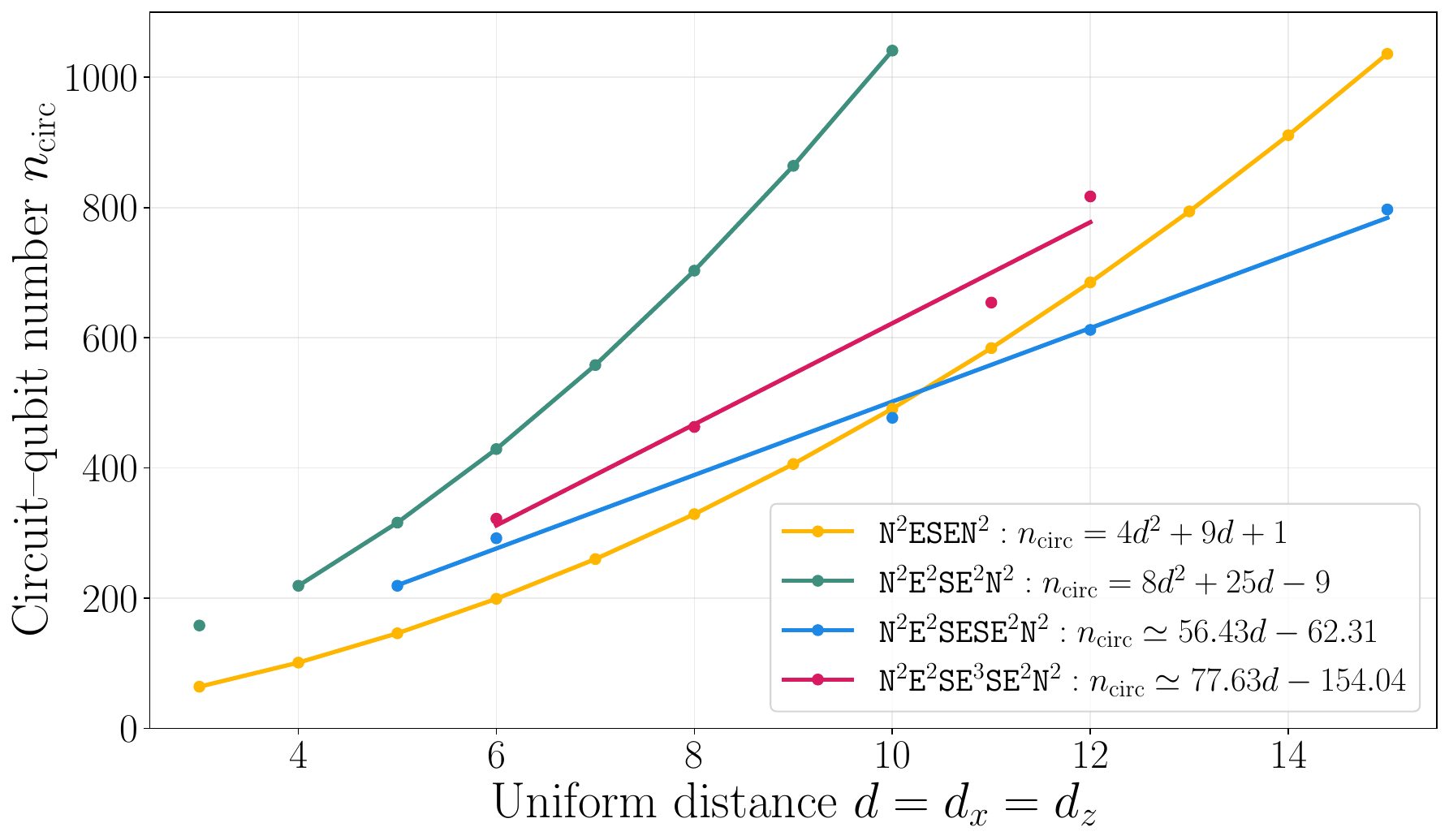}
    \caption{\textbf{Circuit-qubit number as a function of distance.}
    Total number of qubits \(n_{\mathrm{circ}}\) in the implemented nearest-neighbour syndrome-extraction circuit as a function of the distance \(d\). Markers denote optimised layouts for each directional word, and solid curves show the corresponding fits. The \(w=7\) data follow the quadratic fit exactly for all shown distances, while the \(w=9\) data match the quadratic fit exactly apart from the \(d=3\) instance. By contrast, over the finite-distance regime relevant for practical overhead comparisons, the higher-weight words, \(w=11\) and \(w=13\), are well described by approximately linear fits.}
    \label{fig:n_circ_vs_d}
\end{figure}

We observe that logical operators in directional tile codes have a surface-code-like boundary-to-boundary structure; see~\Cref{fig:canon_logicals}. Consistently with this picture, the \(X\)- and \(Z\)-type distances scale linearly with the two bulk dimensions and are primarily controlled by \(M\) and \(N\), respectively; see~\Cref{tab:routing_dis_M_N}. For uniform-distance instances, \(d=d_X=d_Z\), this gives a natural way to compare the implemented circuit size \(n_{\mathrm{circ}}\) as a function of \(d\); see~\Cref{fig:n_circ_vs_d}. For the lower-weight directional words \(\texttt{N}^2\texttt{ESE}\texttt{N}^2\) and \(\texttt{N}^2\texttt{E}^2\texttt{S}\texttt{E}^2\texttt{N}^2\), of weights \(7\) and \(9\), respectively, the resulting circuit-qubit count follows an exact quadratic dependence on \(d\), as expected from a two-dimensional patch whose two linear dimensions both grow with distance.
By contrast, for the higher-weight words \(\texttt{N}^2\texttt{E}^2\texttt{SES}\texttt{E}^2\texttt{N}^2\) and \(\texttt{N}^2\texttt{E}^2\texttt{S}\texttt{E}^3\texttt{S}\texttt{E}^2\texttt{N}^2\), of weights \(11\) and \(13\), the same finite-size data are much better described by an almost linear dependence of \(n_{\mathrm{circ}}\) on \(d\). This is precisely the finite-distance regime of interest for practical overhead comparisons.

\section{Routing overhead and optimisation}
The routing overhead introduced by our scheme is boundary dominated and scales linearly with the relevant code distances. We first prove a general boundary-scaling bound in Lemma~\ref{lem:routing}: for any family of directional tile codes generated from a fixed directional word \(\mathfrak D\), the number of routing qubits required for nearest-neighbour syndrome extraction satisfies
\[
  n_r=\mathcal O(\sqrt n).
\]
This follows because routing qubits are introduced only near the boundary of the planar layout, whose size grows with the linear dimension of the code. More concretely, the logical operators have a boundary-to-boundary structure, so the \(X\)- and \(Z\)-type distances are controlled by the two bulk dimensions, \(M\) and \(N\), respectively; see~\Cref{tab:routing_dis_M_N}. The same data show that both the standard and optimised routing overheads are linear functions of \(M\) and \(N\). Thus, for a fixed directional word, the routing overhead is naturally expressed as
\[
  n_r=\mathcal O(d_X+d_Z),
\]
and reduces to \(n_r=\mathcal O(d)\) in balanced families where \(d_X\) and \(d_Z\) are comparable. This distance-linear behaviour is shown explicitly in~\Cref{fig:uniform_routing_vs_distance} for the uniform-distance slice \(d\), where both the standard and optimised routing overheads grow approximately linearly with \(d\), while the optimisation yields substantial constant-factor savings across all directional words. This is analogous to the routing overhead recently proved for colour-code constructions~\cite{yoshida2025lowdepthcolorcode} and in correspondence with circuit-qubit number scaling we observed.

Since the routing overhead comes primarily from boundary routing qubits, we reduce it by removing routing sites that are not load-bearing during a syndrome-extraction round. Concretely, if a routing qubit acts only as a terminal relay for a single data or check qubit \(q\), and is not required between two data-check interactions involving \(q\), we absorb this routing site by shifting the corresponding start or end position of \(q\) to that site; see~\Cref{fig:1} (b),~\Cref{fig:nesen_syndrome_extraction}, and~\Cref{fig:routing_optimisation}. We iterate this local simplification until no further terminal routing sites can be removed. The resulting circuit is accepted only after verifying that the measured stabiliser supports, detectors, and logical observables are unchanged. A detailed description of the optimisation procedure is given in Appendix~\ref{app:routing_overhead}.

In multi-patch architectures, part of the routing overhead may be shared with other auxiliary regions already required for logical operations. Lattice-surgery protocols, for example, typically use padding regions between neighbouring patches as auxiliary degrees of freedom for patch merging and logical Pauli measurements~\cite{yang2025planarfaulttolerantquantumcomputation,HorsmanSurgery}. From this perspective, boundary routing qubits need not always be viewed as a completely separate overhead: in suitable layouts, they may also serve as inter-patch padding for logical operations, or as auxiliary degrees of freedom for intra-patch logical operations derived from automorphisms~\cite{breuckmann2025logicaloperatorsderivedautomorphisms}; see Appendix~\ref{app:automorphism}. Since these routing qubits can be periodically reset, the same layout is also compatible with leakage-mitigation strategies used in time-dynamic and iSWAP-based circuits~\cite{McEwen2023relaxinghardware,eickbusch_demonstration_2025}.

Routing qubits also provide useful diagnostic information during syndrome extraction. Since they are auxiliary degrees of freedom introduced only to mediate the ordered walk, they can be measured at appropriate time steps. Non-trivial measurement outcomes from routing qubits act as flag information for faults occurring along the routing path and can be supplied directly to the decoder. In our circuit-level simulations, including this routing-qubit information improves decoding performance at relatively high physical error rates, demonstrating that the additional qubits are not merely a geometric overhead but can also provide useful fault-location information; see Appendix~\ref{app:simulation}.

\section{Summary and outlook}

We have introduced directional tile codes: a family of planar qLDPC memories whose stabilisers and syndrome-extraction circuits are specified by the same directional word. Each word defines a bounded-size connected string that serves both as the support of a CSS stabiliser and as the ordered nearest-neighbour walk used to measure it. By exploiting the exchange character of the \(\mathrm{iSWAP}\) gate, these walks generate the required check-data connectivity dynamically in spacetime, rather than requiring a static long-range coupling graph. For a directional word of weight \(w\), one syndrome-extraction round has depth \(w+2\), including check-qubit preparation and measurement, independent of the code size.

This construction weakens the usual dichotomy between surface-code locality and qLDPC efficiency. The surface code is naturally compatible with planar nearest-neighbour superconducting hardware, but pays a large price in encoding rate. Conversely, many high-performance qLDPC constructions achieve much better finite-size parameters, but rely on long-range connectivity or qubit shuttling. Directional tile codes show that this trade-off is not absolute in the finite-size regime relevant to early fault-tolerant devices. We find compact open-boundary instances with high code efficiency. After compiling syndrome extraction into strictly planar nearest-neighbour circuits, this advantage persists at the implemented-layout level: in circuit-level simulations, the best directional tile-code layouts reduce the per-logical per-round logical error rate by up to three orders of magnitude relative to rotated surface-code memories at a comparable footprint of about \(30\) circuit qubits per logical qubit.


Although the implemented layouts already show substantial advantages, further gains are likely possible through more aggressive routing optimisations and a deeper co-design of directional words, planar layouts, syndrome-extraction schedules, and decoders. Several limitations and open directions remain. First, the present numerical evidence is finite-size and based on particular families of directional words; a more systematic search over words, layouts, schedules, and routing optimisations may reveal substantially better instances. Since multiple directional words are allowed at a fixed stabiliser weight, and since these words can realise codes with the same or different logical dimensions, it would also be interesting to develop code-switching techniques between directional tile codes defined by different words. Such techniques could provide additional flexibility for logical operations, layout deformation, or architecture-level compilation. Second, our circuit-level simulations use simplified noise models. It will be important to optimise these circuits under biased noise, coherent circuit noise, and hardware-calibrated error models. Leakage is another natural direction~\cite{McEwen2023relaxinghardware,eickbusch_demonstration_2025}: although the resettable routing qubits provide possible locations for leakage removal and flag information, leakage has not been explicitly modelled here. Third, these memories should be integrated with logical operations. Promising routes include lattice surgery~\cite{yang2025planarfaulttolerantquantumcomputation} between directional tile patches and intra-patch operations derived from code automorphisms~\cite{breuckmann2025logicaloperatorsderivedautomorphisms,sayginel2025automorphisms}, but their full circuit-level cost in planar layouts remains to be assessed. Finally, architecture-level compilation will be needed to understand how directional tile memories perform inside larger fault-tolerant layouts, where routing regions may also serve as inter-patch workspace.

Taken together, these results suggest that qLDPC-like finite-size advantages can survive the stringent locality constraints of planar superconducting hardware when the code, native gates, and syndrome-extraction dynamics are co-designed. Rather than treating connectivity as a fixed hardware resource, directional tile codes use local exchange gates to generate the required connectivity during the error-correction cycle itself. This provides a route toward lower-overhead fault-tolerant quantum memories on strictly planar nearest-neighbour devices.

\section*{Acknowledgements}
We acknowledge helpful discussions with David Byfield, Stefan Filipp, Julio Carlos Magdalena De La Fuente, György P. Gehér, and Archibald Ruban.
The Berlin team has been supported by the BMFTR (QSolid, MUNIQC-Atoms, PasQuops), the DFG (CRC 183, BoLaCo, and SPP 2514), the Quantum Flagship (Millenion, PasQuanS2), the Munich Quantum Valley, Berlin Quantum, QuantERA (SDPCode), and the European Research Council (DebuQC). TN is supported by the Engineering and Physical Sciences Research Council (grant number EP/W524384/1), the University of Edinburgh and Quantinuum. JR is funded by an EPSRC Quantum Career Acceleration Fellowship (grant code: UKRI1224). JR, SK, and BG were supported by the Innovate UK project \enquote{QEC Readout Testbed} [reference number 10151107]. SK and JR were additionally supported by an EPSRC IAA Cross Institutional Project between University of Edinburgh and University of Glasgow.

We thank Georgia M. Nixon, Campbell K. McLauchlan, and Charles Van Rest for coordinating the submission of their independently developed related manuscript, \emph{Vine Codes: Low-Overhead Quantum LDPC Codes on a Planar Square Grid}, which addresses closely related questions concerning low-overhead quantum LDPC codes on planar square-grid architectures.
\clearpage
\bibliographystyle{apsrev4-2}
\bibliography{references}
\clearpage
\appendix

\section{Toric directional codes as bivariate-bicycle codes}
\label{app:algebraic_rep}

Toric directional codes~\cite{GeherByfieldRuban2025,rowshan2026structuralanalysisdirectionalqldpc} are obtained by tessellating a generalised torus with directional tiles. This yields a translationally invariant CSS code whose \(X\)-type stabilisers are translates of the directional word  \(\mathfrak D\), while the \(Z\)-type stabilisers are translates of the same string on the dual lattice. This translational invariance reveals their algebraic nature: toric directional codes form a distinguished subfamily of bivariate-bicycle codes~\cite{bravyi2024high,liang2025generalized}.

Algebraically, an \(X\)-type stabiliser is specified by two finite bivariate polynomials \(f(x,y),g(x,y)\in \FF_2[x,y]/(x^l+1,y^m+1)\), which encode its support on the horizontal and vertical edges, respectively, of an \(l\times m\) grid with periodic boundary conditions. The corresponding
\(Z\)-type stabiliser is supported on the dual string and is encoded, on
horizontal and vertical edges respectively, by the reciprocal polynomials \(\overline{g(x,y)}\) and \(\overline{f(x,y)}\), where for \(f(x,y)=\sum_{i,j} f_{i,j}x^i y^j\) we define \(\overline{f(x,y)} =\sum_{i,j} f_{i,j}x^{-i}y^{-j} =\sum_{i,j} f_{i,j}x^{l-i}y^{m-j}\), with exponents understood modulo \(l\) and \(m\). The CSS parity-check matrices can therefore be written as
\begin{align}
    H_X &= \bigl[\BB(f(x,y)) \mid \BB(g(x,y))\bigr], \\
    H_Z &= \bigl[\BB(\overline{g(x,y)}) \mid \BB(\overline{f(x,y)})\bigr],
\end{align}
where \(\BB\) denotes the binary representation of the quotient ring, defined by \(\BB(x)=S_l\otimes I_m\tand     \BB(y)=I_l\otimes S_m\), with \(S_l\) and $I_l$ the cyclic-shift permutation matrix and the identity matrix of order \(l\), respectively~\cite{bravyi2024high,eberhardt2024logicaloperatorsfoldtransversalgates}.

\begin{lemma}[\((w,w)\)-regularity]
    \label{lem:check_weight}
    Consider a toric directional code defined by a directional word \(\mathfrak D\) of weight \(w\). Then every stabiliser generator, corresponding to a row of \(H_X\) or \(H_Z\), has weight \(w\). Moreover, every data qubit participates in exactly \(w\) stabiliser checks in total. Equivalently, the Tanner graph of the code, including both \(X\)- and \(Z\)-type checks, is \((w,w)\)-regular.
\end{lemma}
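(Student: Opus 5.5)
We prove the statement algebraically, using the bivariate-bicycle form of $H_X$ and $H_Z$.

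\emph{Row weights.} Write $f=\sum_{(i,j)\in F}x^iy^j$ and $g=\sum_{(i,j)\in G}x^iy^j$. Here $F$ and $G$ are the sets of horizontal and vertical edges of the connected string traced by $\mathfrak D$, recorded relative to the anchor. The string has $w$ distinct edges, so $|F|+|G|=w$. We assume the torus is large enough, i.e.\ $l,m$ exceed the bounding box of the tile, so that no two monomials coincide modulo $(x^l+1,y^m+1)$.

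For a monomial $x^iy^j$, the matrix $\BB(x^iy^j)=S_l^i\otimes S_m^j$ is a permutation matrix. Hence $\BB(f)$ is a sum of $|F|$ permutation matrices. Distinct monomials correspond to permutations that differ at every position, since $S_l^iS_l^{-i'}$ is fixed-point free for $i\neq i'$ mod $l$. These permutation matrices therefore have disjoint supports, and every row and every column of $\BB(f)$ has weight exactly $|F|$. Likewise $\BB(g)$ has row and column weight exactly $|G|$.

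It follows that each row of $H_X$ has weight $|F|+|G|=w$. Reciprocation $x^iy^j\mapsto x^{-i}y^{-j}$ is a bijection on monomials, so $\overline f$ has $|F|$ terms and $\overline g$ has $|G|$ terms. Each row of $H_Z$ therefore also has weight $|G|+|F|=w$.

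\emph{Column weights.} A horizontal-edge qubit corresponds to a column of the left block. It has weight $|F|$ in $H_X$, coming from $\BB(f)$, and weight $|G|$ in $H_Z$, coming from $\BB(\overline g)$. Its total number of checks is $|F|+|G|=w$. A vertical-edge qubit symmetrically has $|G|$ incidences in $H_X$ and $|F|$ in $H_Z$, again totalling $w$.

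Hence every data qubit lies in exactly $w$ checks when $X$- and $Z$-checks are counted together, and the full Tanner graph is $(w,w)$-regular. The same count applies to the interaction statement used in the main text: each data qubit meets at most $w$ checks in a round.

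\emph{Main obstacle.} The delicate step is justifying that the $w$ edges of the string give $w$ \emph{distinct} monomials modulo $(x^l+1,y^m+1)$. This requires two things. First, the directional word never retraces an edge, which follows from the tile being a string of $w$ edges supporting a weight-$w$ check. Second, the torus dimensions are large compared with the tile extent, so no wraparound collision occurs.

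For the twisted-torus identification the same argument applies to the lattice quotient. Collisions are excluded as long as no nonzero period vector is a difference of two tile edges. We would state this as the standing nondegeneracy assumption implicit in a directional code of weight $w$. Without it, terms would cancel over $\FF_2$ and the weight could drop below $w$.
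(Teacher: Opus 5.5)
Your proof is correct and follows the same route as the paper. Both arguments read off the row weight $w_f+w_g=w$ from the regular blocks $\BB(f),\BB(\overline f)$ and $\BB(g),\BB(\overline g)$, and get column weight $w$ by adding the $H_X$ and $H_Z$ contributions ($|F|$ and $|G|$ for horizontal edges, $|G|$ and $|F|$ for vertical edges). You are more explicit than the paper in two places: you justify block regularity via disjoint permutation matrices (for $i=i'$ the fixed-point-freeness comes from the $S_m$ factor instead), and you state the distinct-monomials, no-wraparound assumption that the paper leaves implicit in writing $w=w_f+w_g$.
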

\begin{proof}
    This follows directly from the bivariate-bicycle representation. Let \(w_f=|\supp\ f|\) and \(w_g=|\supp\ g|\), so that \(w=w_f+w_g\). By construction, \(\BB(f)\) and \(\BB(\overline f)\) are \((w_f,w_f)\)-regular, while \(\BB(g)\) and \(\BB(\overline g)\) are \((w_g,w_g)\)-regular. Hence each row of \(H_X\) and \(H_Z\) has weight \(w_f+w_g=w\). Similarly, for each data qubit, the total number of stabiliser checks it participates in is the sum of the corresponding column weights from \(H_X\) and \(H_Z\), which is again \(w_f+w_g=w\).
\end{proof}

\section{Directional tile codes for biased noise}

For a fixed directional word \(\mathfrak D\), varying the aspect ratio of the rectangular base grid changes the balance between the \(X\)- and \(Z\)-type distances. Below, we list several directional tile codes with strongly 
unbalanced distances, making them potentially useful for biased-noise architectures~\cite{Putterman, Hajr_2024, Lescanne_2020, XZZX, Tuckett_2018, PhysRevLett.133.110601,roffe_bias-tailored_2023, Xu_2023, das2026clifforddeformedzerorateldpccodes}.

\begin{table}[H]
\centering
\begin{tabular}{c| c c |c|c| c}
\toprule
$\mathfrak D,\ w$ &$M$ & $N$ & $\code{n,k,d_X,d_Z}$ & $n_r$ &$(kd_Xd_Z)/n$ \\
\midrule
\midrule

$\texttt{NESEN},\ 5$& 22 & 3 & $\code{259,2,\leq37,4}$ & $58$ & 1.14 \\

\cmidrule{1-6}
$\texttt{N}^2\texttt{ESE}\texttt{N}^2,\ 7$& 22 & 3 & $\code{269,4,\leq 28,5}$ & $87$ & 2.08\\

\cmidrule{1-6}
{$\texttt{N}^2\texttt{E}^2\texttt{S}\texttt{E}^2\texttt{N}^2,\ 9$}& 19 & 4 & $\code{365,10,\leq 31,4}$& $196$ & 3.40\\

\cmidrule{1-6}
{$\texttt{N}^2\texttt{E}^2\texttt{SES}\texttt{E}^2\texttt{N}^2,\ 11$}& 12 & 13 & $\code{537,14,\leq 35,12}$& $216$ & 10.94\\

\cmidrule{1-6}
{$\texttt{N}^2\texttt{E}^3\texttt{SES}\texttt{E}^3\texttt{N}^2,\ 13$}& 16 & 9 & $\code{611,20,\leq84,7}$& $326$ & 19.24\\

\bottomrule
\end{tabular}
\caption{\textbf{Parameters of representative directional tile codes with unbalanced distances.}
Here \(n_r\) denotes the optimised number of routing qubits required for nearest-neighbour syndrome extraction. For each fixed directional-word weight \(w\), we perform a brute-force search over all allowed directional words \(\mathfrak D\) and base-grid sizes \(M\times N\) with \(M+N\leq 25\). The table reports the instances with the largest unbalanced code-efficiency ratio \(k d_X d_Z/n\) within this search space, compared with rectangular rotated surface-code patches. For distances above \(20\), the integer-programming distance computation does not terminate within a reasonable time; in these cases we use the probabilistic algorithm \texttt{QDistRnd}~\cite{Pryadko_2022} to obtain distance upper bounds.}

\label{tab:code_para_bias}
\end{table}

\section{Routing overhead of directional tile codes}
\label{app:routing_overhead}

\begin{figure*}
    \centering
    \includegraphics[width=\linewidth]{figures/routing_optimisation.pdf}
    \caption{\textbf{Routing optimisation.}
    Example of the boundary-routing optimisation used to reduce the number of auxiliary qubits in the implemented layout and circuit for the \(\texttt{N}^2\texttt{ESE}\texttt{N}^2\) directional tile code with parameters \(\code{60,4,5}\) as presented in \Cref{fig:1}. The left panel shows the standard layout before optimisation: empty circles denote routing sites, red crosses mark routing sites that are removed, and blue arrows indicate shifts of data or check start positions. The right panel shows the optimised layout after pruning terminal routing sites that are not load-bearing during the syndrome-extraction walk. The optimisation preserves the measured stabiliser supports, detectors and logical observables, while reducing the total number of circuit qubits.}
    \label{fig:routing_optimisation}
\end{figure*}

In this section, we bound the routing overhead of directional tile codes and describe the boundary optimisation used in the implemented circuits. The key point is that routing qubits are required only near the boundary of the planar layout. Consequently, for fixed directional word, the routing overhead grows with the linear size of the patch rather than with its area.

\begin{lemma}[$\sqrt n$-scaling of the routing overhead]
    \label{lem:routing}
    Consider a family of directional tile codes obtained from a fixed directional word \(\mathfrak D\) with fixed tile size \(B\times B\). Let the bulk patch have size \(M\times N\), with bounded aspect ratio, and let \(n\) denote the number of data qubits. Then the number \(n_r\) of routing qubits required for nearest-neighbour syndrome extraction satisfies
    \[
        n_r=\mathcal O(\sqrt n).
    \]
\end{lemma}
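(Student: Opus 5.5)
The plan is to show that routing qubits live only in a boundary strip of bounded width around the \(M\times N\) patch. The number of lattice sites in such a strip is \(\mathcal O(M+N)\), while \(n=\Theta(MN)\); under bounded aspect ratio this gives \(\mathcal O(\sqrt n)\).

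\textbf{Step 1: \(n\) grows like the area.} Every bulk anchor carries a tile whose support is a connected string of bounded size, contained in a \(B\times B\) window. Since \(B\) is fixed, each anchor accounts for \(\mathcal O(1)\) edges. After pruning, all bulk edges, meaning those covered by both \(X\)- and \(Z\)-tiles, survive. The number of such edges is at least \(c\,MN\) for a constant \(c>0\) depending only on \(\mathfrak D\), and at most \(\mathcal O((M+B)(N+B))\). Hence \(n=\Theta(MN)\). With \(M/N\) bounded this gives \(M,N=\Theta(\sqrt n)\).

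\textbf{Step 2: no routing is needed in the bulk.} Fix a check qubit whose anchor lies at lattice distance more than some constant \(R=R(B,w)\) from the patch boundary. Its walk under Algorithm~\ref{alg:syndrome_extraction} involves only sites within distance \(w\) of its start. Those sites lie in a region where the layout coincides with a local window of the toric directional code of Appendix~\ref{app:algebraic_rep}. In the toric code the walk is realised by CXSWAP gates alone, with no routing sites. This is the layout of Ref.~\cite{GeherByfieldRuban2025}, whose parity condition we imposed in Definition~\ref{def:directional_tiles}. Translation invariance of the bulk then shows that every step of such a walk meets either a data qubit or another check, never a gap. Consequently a routing site is needed only where some walk passes through a site left empty by tessellation or pruning. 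Every such site lies within distance \(R+w\) of the boundary of the \((M+B-1)\times(N+B-1)\) grid.

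\textbf{Step 3: counting.} Each walk visits at most \(w\) sites, so each check contributes at most \(w\) routing sites. The checks whose walks can leave the bulk region are anchored within distance \(R\) of the boundary, and there are \(\mathcal O(R(M+N))\) of them. Alternatively, one can directly count the \(\mathcal O((R+w)(M+N+2B))\) grid sites in the boundary strip. Either way \(n_r=\mathcal O(M+N)=\mathcal O(\sqrt n)\), with constants depending only on \(\mathfrak D\) and \(B\).

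\textbf{Main obstacle.} The delicate point is Step 2: making rigorous that the bulk layout is locally identical to the toric one, so that walks there require no routing. I would handle it by noting that tile placement and pruning are translation-invariant rules depending only on a radius-\(B\) neighbourhood of each edge. The windows in question are far from the boundary, so the open-boundary and toric codes agree there, and the toric statement supplies the rest.
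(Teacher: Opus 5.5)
Your proposal is correct, but it takes a different route from the paper. The paper never analyses where routing is needed; it uses a pure counting argument. It counts all hardware locations (vertices, edges and plaquettes) of the enlarged \((M+2(B-1))\times(N+2(B-1))\) region, which gives \(4MN+\mathcal O(M+N)\). It then observes that the \(2MN\) bulk check qubits and the \(n\ge 2MN\) data qubits already fill at least \(4MN\) of these sites, where \(n\ge 2MN\) follows from linear independence of the bulk stabilisers. So at most \(\mathcal O(M+N)\) sites remain for routing. This needs nothing about the circuit dynamics, only the implicit assumption that the whole layout fits in that box with one qubit per site.

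You instead localise the routing. You show the bulk dynamics coincide with the routing-free toric directional code: checks and data form a checkerboard on the fine grid, so every layer is a perfect matching of CXSWAPs. This confines routing to a boundary strip of width \(\mathcal O(R+w)\). Your route gives more structure. It justifies the ``routing only near the boundary'' intuition that the paper states in words but does not use. It also derives the confinement of the layout from the walk length rather than assuming it. The cost is the locality step you flag yourself.

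When you write that step, phrase it as a light-cone statement for the depth-\(w\) circuit: after \(t\) layers, the content of a site depends only on initial contents within distance \(t\). This matters because data and routing qubits are transported too, not just checks. Agreement of the static layout on a window of radius roughly \(R+2w\) then suffices.

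Rely on the strip count rather than the per-check bound of ``at most \(w\) routing sites per check.'' The per-check bound can miss routing qubits that only shuttle data, as in the second loop of Algorithm~\ref{alg:syndrome_extraction}. Your covering argument for \(n\ge cMN\) is an adequate substitute for the paper's linear-independence bound, since only the lower bound is needed to turn \(\mathcal O(M+N)\) into \(\mathcal O(\sqrt n)\).
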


\begin{proof}
    Let \(L_h=M+2(B-1)\) and \(L_v=N+2(B-1)\). Since \(B\) is fixed, \(L_h=M+\mathcal O(1)\) and \(L_v=N+\mathcal O(1)\). The total number of hardware locations on vertices, edges and plaquettes is
    \eqa{
        n_{\rm hw}
        &=L_hL_v+(2L_hL_v-L_h-L_v)+(L_h-1)(L_v-1)\\
        &=4MN+\mathcal O(M+N).
    }
    The bulk contains \(MN\) stabiliser anchors of each type. Thus there are \(2MN\) bulk check qubits, and since the corresponding stabilisers are linearly independent,
    \[
        n\ge 2MN=\Theta(MN).
    \]
    Hence, data and check qubits together occupy at least \(4MN\) hardware locations. Therefore
    \[
        n_r\le n_{\rm hw}-4MN=\mathcal O(M+N).
    \]
    Under bounded aspect ratio, \(M+N=\mathcal O(\sqrt{MN})\), and hence
    \[
        n_r=\mathcal O(\sqrt n).
    \]
\end{proof}

The boundary-scaling bound is naturally expressed in terms of distance. The logical operators of these codes have a boundary-to-boundary structure, so the \(X\)- and \(Z\)-type distances are controlled primarily by the two bulk dimensions \(M\) and \(N\), respectively. This is reflected in the empirical fits in~\Cref{tab:routing_dis_M_N}. The same table shows that both the standard routing overhead \(\widetilde n_r\) and the optimised routing overhead \(n_r\) are affine functions of \(M\) and \(N\). Thus, for fixed directional word, the routing overhead scales linearly with the relevant boundary-to-boundary distance scale. In particular, for the balanced families used in our main overhead comparisons, this gives
\[
    n_r=\mathcal O(d).
\]
This behaviour is shown explicitly in~\Cref{fig:uniform_routing_vs_distance} for uniform-distance instances, \(d=d_X=d_Z\), where both the standard and optimised routing overheads grow approximately linearly with \(d\). The optimisation gives substantial constant-factor savings across all directional words.

\begin{table*}[]
\centering
\begin{tabular}{c|c|c|c|c}
\toprule
\(\mathfrak D,\ w\) & \(\widetilde{n_r}\) & \(n_r\) & \(d_X\) & \(d_Z\)\\
\midrule
\midrule

\(\texttt{N}^2\texttt{ESE}\texttt{N}^2,\ 7\) 
& \(8M+8N+36\) 
& \(3M+2N+15\) 
& \(\simeq 1.10M-0.08N+2.25\) 
& \(\simeq 1.01N-0.01M+2.07\) \\

\cmidrule{1-5}
\(\texttt{N}^2\texttt{E}^2\texttt{S}\texttt{E}^2\texttt{N}^2,\ 9\) 
& \(18M+8N+72\) 
& \(7M+6N+39\) 
& \(\simeq 1.51M-0.23N+2.54\) 
& \(\simeq 0.57N-0.05M+1.68\) \\

\cmidrule{1-5}
\(\texttt{N}^2\texttt{E}^2\texttt{SES}\texttt{E}^2\texttt{N}^2,\ 11\) 
& \(24M+8N+98\) 
& \(9M+4N+56\) 
& \(\simeq 2.30M-0.12N+3.77\) 
& \(\simeq 0.87N-0.03M+0.74\) \\

\cmidrule{1-5}
\(\texttt{N}^2\texttt{E}^2\texttt{S}\texttt{E}^3\texttt{S}\texttt{E}^2\texttt{N}^2,\ 13\) 
& \(36M+8N+136\) 
& \(13M+4N+79\) 
& \(\simeq 2.97M-0.08N+3.01\) 
& \(\simeq 0.65N-0.01M+0.59\) \\

\bottomrule
\end{tabular}

\caption{\textbf{Routing overhead and distance scaling with bulk dimensions.}
For each directional word \(\mathfrak D\), the table gives the routing overhead before and after optimisation as functions of the bulk dimensions \(M\) and \(N\). Here \(\widetilde n_r\) denotes the routing overhead of the standard layout, while \(n_r\) denotes the optimised routing overhead after removing unnecessary routing qubits. The last two columns give empirical linear fits for the \(X\)- and \(Z\)-type distances, showing that \(d_X\) is primarily controlled by \(M\) and \(d_Z\) by \(N\).}
\label{tab:routing_dis_M_N}
\end{table*}

We reduce the routing overhead by generating and testing simplified routing layouts. We call the primary  simplification  ``route window shortening''. For each boundary stabiliser, we identify the first and last steps of the directional word at which it interacts with surviving data qubits. Any prefix or suffix outside this active window is not needed to measure the stabiliser support, so the check start position can be shifted and the corresponding terminal routing sites can be removed, provided that no check-start collision or data-check overlap is introduced. We then apply the same principle to terminal routing sites used only by a single data or check qubit \(q\): if the site is not required between two data-check interactions involving \(q\), it is absorbed into the trajectory of \(q\) by moving the appropriate start position to that site. For data qubits, this is implemented as a data-start shift onto a former routing coordinate.

After each proposed removal, we try to greedily remove more routing qubits, by iteratively deleting routing qubits near the edges of the layout and shifting the remaining qubits to fill the newly introduced hole. We then run the directional syndrome extraction normally for two rounds (forwards and backwards) and accept the move only if every stabiliser still interacts with exactly the data qubits specified by the original parity-check matrix. We call this method ``trace pruning'' as it relies on the trace of the directional word on the grid fixing future steps. The intuition for this is that even though in the original layout some routing qubits might appear necessary, through the displacement of early directional word steps, their neighbour routing qubits can fill the gap instead when near the middle of the schedule. Several candidate circuits are generated in this way, including variants with and without data-start shifts and with different combinations of greedy deletions. The final candidate is accepted only if the measured stabiliser supports, detectors, and logical observables are unchanged, and if the detector error model remains deterministic. Among all valid candidates, we choose the one with the smallest number of physical qubits. We show the unoptimised layout, together with the routing-site moves and deletions produced by the combined optimisation procedure, in Figure~\ref{fig:routing_optimisation}.

\begin{example}[Routing optimisation of the $\code{323,14,15}$ directional tile code.]
For the $\code{323,14,15}$ directional tile code generated by the weight-$11$ directional word
\(\texttt{N}^2\texttt{E}^2\texttt{SES}\texttt{E}^2\texttt{N}^2\), the unoptimised layout contains
\(346\) routing sites. By combining the two optimisation procedures and repeating them with
different update orders, the routing overhead can be reduced iteratively. Applying
route-window shortening first reduces the number of routing sites from \(346\) to \(235\).
Subsequent trace pruning of the remaining routing sites further reduces this number to
\(165\). Thus, the combined optimisation removes \(181\) routing sites in total, corresponding
to a reduction of approximately \(52\%\) relative to the unoptimised layout.
\end{example}

\begin{figure}
    \centering
    \includegraphics[width=\linewidth]{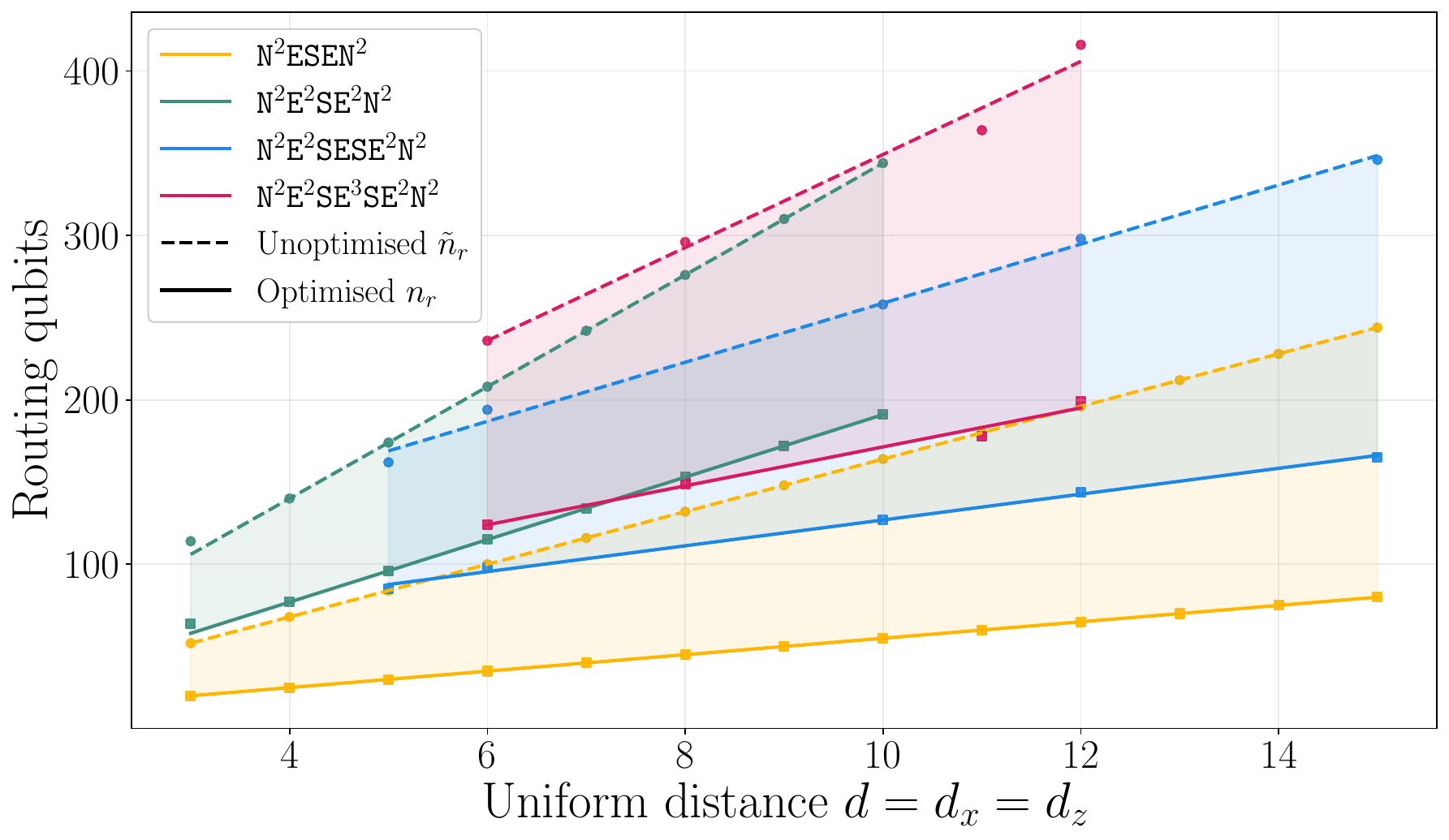}
    \caption{\textbf{Routing overhead scales linearly with distance.}
    Routing-qubit overhead for representative directional tile-code families with uniform distance \(d=d_X=d_Z\). Dashed 
    curves show the routing overhead \(\widetilde n_r\) of the standard layout, while solid curves show the optimised overhead \(n_r\) after boundary-routing simplification. Both the standard and optimised routing overheads scale approximately linearly with \(d\), while the optimisation yields substantial savings across all directional words.}
    \label{fig:uniform_routing_vs_distance}
\end{figure}

\section{Logical gates from automorphisms and derived automorphisms}
\label{app:automorphism}

We briefly discuss possible mechanisms for implementing logical operations in the directional tile-code families considered in this work. Since these codes are constructed as tile codes, it is natural to use the structural description of logical operators developed in Refs.~\cite{steffan2025high,breuckmann2025logicaloperatorsderivedautomorphisms}. In particular, Ref.~\cite{breuckmann2025logicaloperatorsderivedautomorphisms} gives a canonical symplectic basis of logical Pauli operators for tile codes, with representatives supported near the lattice boundaries and generated by a finite cellular-automaton rule. We use this basis to describe the logical Pauli operators of our directional tile codes and to visualise representative logical operators on the physical qubit lattice; examples are shown in~\Cref{fig:canon_logicals}.

\begin{figure*}
    \centering
    \includegraphics[width=\linewidth]{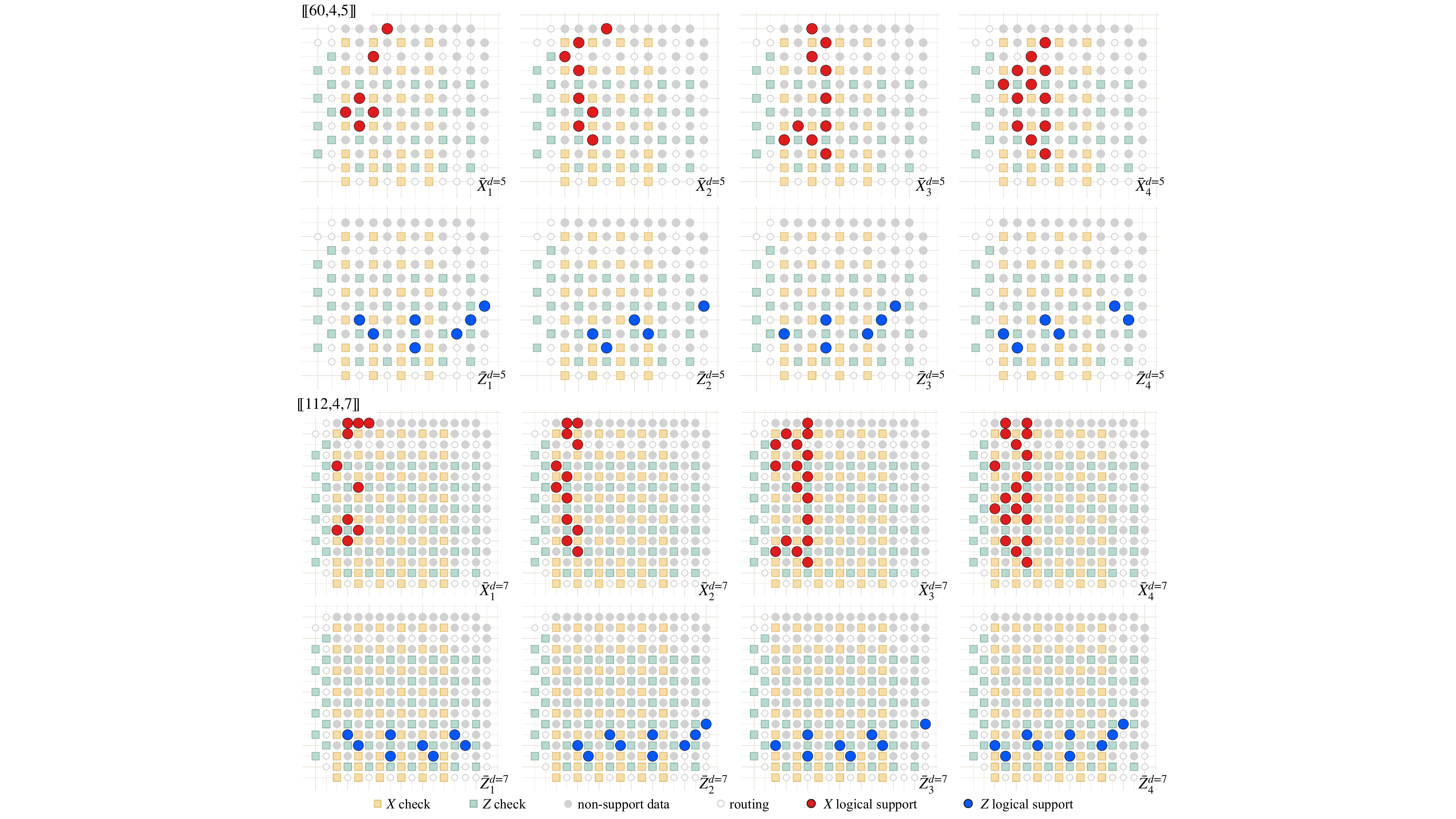}
    \caption{\textbf{Canonical logical Pauli bases for the $\texttt{N}^2\texttt{ESE}\texttt{N}^2$ directional tile codes with stabiliser weight $w=7$.}
    We show two representative codes from this directional-word family, with parameters $\code{60,4,5}$ and $\code{112,4,7}$. Supports of $X$-type logical operators are shown in red, while supports of $Z$-type logical operators are shown in blue. The displayed operators form canonical logical Pauli bases constructed using the method of Ref.~\cite{breuckmann2025logicaloperatorsderivedautomorphisms}, such that $\bar X_i$ and $\bar Z_j$ anti-commute if and only if \(i=j\), and otherwise commute.}
    \label{fig:canon_logicals}
\end{figure*}

A first route to logical operations is provided by the \textit{derived-automorphism}~\cite{breuckmann2025logicaloperatorsderivedautomorphisms}. In this approach, the tile-code patch is temporarily enlarged along one boundary, implementing a controlled deformation of the code. This deformation induces a non-trivial Clifford action on the encoded qubits. When expressed in the canonical logical basis of Ref.~\cite{breuckmann2025logicaloperatorsderivedautomorphisms}, the resulting action decomposes into products of logical $\mathrm{CNOT}$ gates, while remaining compatible with a low-overhead fault-tolerant implementation.

Some of the directional words studied here have an additional geometric symmetry: they are palindromic about their central step. For a directional tile code with directional word $\mathfrak D=\vec d_1\vec d_2\cdots \vec d_w$, the sequence satisfies
\[
    \vec d_i=\vec d_{w+1-i},
\]
possibly after identifying directions under the corresponding reflection of the tile. This symmetry induces a reflection of the physical qubit layout,
\[
    \rho:q_i\mapsto q_{w+1-i},
\]
on the tile-code lattice. Since the reflected directional word defines the same local stabiliser constraints, the corresponding physical qubit permutation preserves the stabiliser group and therefore defines a \textit{code automorphism}~\cite{sayginel2025automorphisms}.

This automorphism can in principle be realised as a permutation of physical qubits, or, equivalently, by relabelling physical qubits. Although the reflected qubits generally lie on opposite sides of the directional word, so the permutation is not geometrically nearest-neighbour in the original planar layout, we do not need to realise it as an actual SWAP circuit. Instead, each qubit is re-indexed by its reflected position, and the stabiliser generators, decoder information, and Pauli frame are updated accordingly. Using the canonical logical Pauli basis, we find that this reflection automorphism acts non-trivially on the encoded qubits. The induced logical Clifford operation decomposes into a product of logical SWAP and CNOT gates.

\begin{table*}[]
    \centering
    \begin{tabular}{c|| c c c c c c c}
    \toprule
    \(\code{n,k,d}\) & $\code{24,4,3}$ & $\code{40,4,4}$ & $\code{60,4,5}$ & $\code{84,4,6}$ & $\code{112,4,7}$ & $\code{144,4,8}$ & $\code{180,4,9}$ \\

    \midrule
    \(|\mathrm{Aut}|\) & \(2^{9}\) & \(2^{13}\) & \(2^{17}\) & \(2^{21}\) & \(2^{24}\) & \(2^{23}\) & \(2^{33}\) \\
    \bottomrule
    \end{tabular}
    \caption{\textbf{Automorphism-group orders for selected $\texttt{N}^2\texttt{ESE}\texttt{N}^2$ directional tile codes.}
    For each code instance \(\code{n,k,d}\), the table gives the order of the physical-qubit permutation + single qubit Clifford gate group preserving the stabiliser code.}
    \label{table:w7_aut_orders}
\end{table*}

We complement these analytic observations with an automated search for further code automorphisms. Following the framework of Ref.~\cite{sayginel2025automorphisms}, one can represent a stabiliser code as a binary linear code, compute generators of its automorphism group, impose the allowed Clifford constraints, and determine the induced logical action of each physical symmetry. Applying this approach to our code families reveals multiple automorphism generators and large symmetry groups, suggesting that the palindromic reflection is only one instance of a richer algebraic structure. These additional symmetries may provide useful ingredients for fault-tolerant Clifford computation~\cite{he_extractors_2025} or symmetry-assisted decoding~\cite{koutsioumpas_automorphism_2025}. Examples of automorphism-group orders for the $\texttt{N}^2\texttt{ESE}\texttt{N}^2$ directional tile code family with stabiliser weight $w=7$ are listed in~\Cref{table:w7_aut_orders}.

\section{Circuit-level memory simulations with only nearest-neighbour operations}
\label{app:simulation}

\begin{table}[]
    \centering
    \begin{tabular}{c|c|c|c}
    \toprule
    Code family & No. patches & \(\code{n,k,d}\) & \(p_L(p=0.001)\)  \\
    \midrule
    \midrule
    \multirow{7}{*}[-0.2ex]{\(\texttt{N}^2\texttt{ESE}\texttt{N}^2,\ 7\)}
    & \multirow{7}{*}[-0.2ex]{\(35\)}
    & \(\code{24,4,3}\)   & \(3.80\times 10^{-4}\) \\
    & & \(\code{40,4,4}\)   & \(7.45\times 10^{-5}\) \\
    & & \(\code{60,4,5}\)   & \(7.76\times 10^{-6}\) \\
    & & \(\code{84,4,6}\)   & \(2.37\times 10^{-6}\) \\
    & & \(\code{112,4,7}\)  & \(7.83\times 10^{-7}\) \\
    & & \(\code{144,4,8}\)  & \(2.33\times 10^{-7}\) \\
    & & \(\code{180,4,9}\)  & \(1.50\times 10^{-7}\) \\
    \midrule
    \multirow{5}{*}[-0.2ex]{\(\texttt{N}^2\texttt{E}^2\texttt{S}\texttt{E}^2\texttt{N}^2,\ 9\)}
    & \multirow{5}{*}[-0.2ex]{\(14\)}
    & \(\code{52,10,3}\)   & \(1.22\times 10^{-4}\) \\
    & & \(\code{76,10,4}\)   & \(1.71\times 10^{-5}\) \\
    & & \(\code{115,10,5}\)  & \(2.65\times 10^{-6}\) \\
    & & \(\code{162,10,6}\)  & \(5.85\times 10^{-7}\) \\
    & & \(\code{217,10,7}\)  & \(2.23\times 10^{-7}\) \\
    \midrule
    \multirow{3}{*}[-0.2ex]{\(\texttt{N}^2\texttt{E}^2\texttt{SES}\texttt{E}^2\texttt{N}^2,\ 11\)}
    & \multirow{3}{*}[-0.2ex]{\(10\)}
    & \(\code{74,14,5}\)   & \(7.87\times 10^{-5}\) \\
    & & \(\code{104,14,6}\)  & \(4.10\times 10^{-6}\) \\
    & & \(\code{182,14,10}\) & \(4.29\times 10^{-7}\) \\
    \midrule
    \multirow{3}{*}[-0.2ex]{\(\texttt{N}^2\texttt{E}^2\texttt{S}\texttt{E}^3\texttt{S}\texttt{E}^2\texttt{N}^2,\ 13\)}
    & \multirow{3}{*}[-0.2ex]{\(7\)}
    & \(\code{109,20,6}\)  & \(3.68\times 10^{-5}\) \\
    & & \(\code{167,20,8}\)  & \(1.37\times 10^{-6}\) \\
    & & \(\code{248,20,11}\) & \(1.77\times 10^{-7}\) \\
    \midrule
    \multirow{2}{*}[-1ex]{Rotated surface codes}
    & \multirow{4}{*}[-0.2ex]{\(140\)}
    & \(\code{9,1,3}\)   & \(5.48\times 10^{-4}\) \\
    & & \(\code{25,1,5}\)  & \(4.35\times 10^{-5}\) \\
    \multirow{2}{*}[1ex]{($4-\mathrm{CX}$)}& & \(\code{49,1,7}\)  & \(3.03\times 10^{-6}\) \\
    & & \(\code{81,1,9}^\dagger\)  & \(1.75\times 10^{-7}\) \\
    \midrule
    \multirow{2}{*}[-1ex]{Rotated surface codes}
    & \multirow{4}{*}[-0.2ex]{\(140\)}
    & \(\code{9,1,3}\)   & \(5.81\times 10^{-4}\) \\
    & & \(\code{25,1,5}\)  & \(4.73\times 10^{-5}\) \\
    \multirow{2}{*}[1ex]{($4-\mathrm{CXSWAP}$)}& & \(\code{49,1,7}\)  & \(3.63\times 10^{-6}\) \\
    & & \(\code{81,1,9}^\dagger\)  & \(3.25\times 10^{-7}\) \\
    \bottomrule
    \end{tabular}
    \caption{\textbf{All codes included in the circuit-level simulations of \Cref{fig:footprint}.}
    Several patches of each code are used to build a quantum memory of \(140\) logical qubits, which is the least common multiple of the logical dimensions of the codes listed, in order to make a fair comparison. Here \(p_L(p=0.001)\) denotes the per-logical per-round logical error rate in this \(140\)-logical-qubit memory at physical error rate \(p=0.001\). \(^\dagger\)These codes lie far to the right of the plotting range in \Cref{fig:footprint} and is omitted for clarity.}
    \label{tab:full_sims_codes}
\end{table}

\begin{table}[H]
\centering
    \begin{tabular}{llc}
    \toprule
    Operation & Error channel & Probability \\ \midrule
    \midrule
    Two qubit Cliffords & two-qubit depolarising & $p$ \\ 
    One qubit Clifford & one-qubit depolarising & 
    $p$\\
    Reset in $X$ ($Z$) basis & phase (bit) -flip & $p$ \\
    Measure in $X$ ($Z$) basis & phase (bit) -flip & $p$ \\
    Idling qubits  & one-qubit depolarising & $p$ \\
    \bottomrule
    \end{tabular}
    \caption{\textbf{The uniform noise model.} Operations, their associated error channels and probabilities for the uniform noise model.}
    \label{table:uniform_noise}
\end{table}

For our numerical evaluations, we perform circuit-level memory simulations, with four syndrome extraction rounds for all code instances. The circuits are compiled using CXSWAP gates, combined with a uniform depolarising noise model outlined in Table~\ref{table:uniform_noise}. In our main results, shown in Figure~\ref{fig:footprint}, we fix the noise parameter $p$ to $0.001$.

All simulations are executed using \texttt{Stim}~\cite{gidney2021stim}. For each circuit, we sample up to 10 million shots, terminating early once 30 logical errors have been observed. Decoding is performed with \texttt{VibeLSD}~\cite{Kouts_vibedec}, which performs a parallel ensemble of serial min-sum schedules with localised statistics post-processing on unconverged shots~\cite{hillmann_localized_2025}. Here we use an ensemble of size of 200 and LSD order 0, with 15 min-sum iterations per round. Although the detector error models for directional tile-code circuits are denser than those of standard surface-code circuits, due to the longer directional walks and the presence of routing-qubit detectors, we find that this generic \texttt{BP}-based decoder already gives strong circuit-level performance. We have not attempted to optimise the decoder for these circuits, and further improvements may be possible with decoding methods tailored to the denser spacetime structure of directional tile codes.

\end{document}